\newcommand{\ops}{\ensuremath{\mathsf{ops}}}
\newcommand{\SG}[3]{\ensuremath{(#1,#2)}\textsc{-AEC-#3}}
\newcommand{\SGSTD}[2]{\SG{#1}{#2}{Std}}
\newcommand{\SGNP}[2]{\SG{#1}{#2}{NP}}
\newcommand{\SGEP}[2]{\SG{#1}{#2}{EP}}
\newcommand{\Partition}{\textsc{Partition}\xspace}
\newcommand{\ProductPartition}{\textsc{ProductPartition}\xspace}
\newcommand{\ProductPartitionEqual}{\textrm{\ProductPartition-\texorpdfstring{$n/2$}}\xspace}
\newcommand{\TPartition}{\textsc{3-Partition}\xspace}
\newcommand{\TPartitionEqual}{\textsc{3-Partition}\-3\xspace}
\newcommand{\defineproblemrefcomment}[5]{%
  \begin{problem}[\unboldmath #1]\rm~

  \noindent\textbf{Instance:} #2

  \noindent\textbf{Question:} #3
  
  \noindent\textbf{Reference:} #4
  
  \noindent\textbf{Comment:} #5
  \end{problem}%
}
\newcommand{\defineproblem}[3]{%
  \begin{problem}[\unboldmath #1]\rm~

  \noindent\textbf{Instance:} #2

  \noindent\textbf{Question:} #3

  \end{problem}%
}
\begin{document}
\title{The Computational Complexity of Finding Arithmetic Expressions With and Without Parentheses}
\titlerunning{Finding Arithmetic Expressions With and Without Parentheses}
% If the paper title is too long for the running head, you can set
% an abbreviated paper title here
%
\author{Jayson Lynch\inst{1}%\orcidID{0000-0003-0801-1671}
\and
Yan (Roger) Weng\inst{2}}
\authorrunning{}
% First names are abbreviated in the running head.
% If there are more than two authors, 'et al.' is used.
%
\institute{University of Waterloo, Waterloo, ON, Canada, \and
The Peddie School, 201 S Main St, Hightstown, NJ 08520, USA}
\maketitle              % typeset the header of the contribution
\begin{abstract}
We show NP-completeness for various problems about the existence of arithmetic expression trees. When given a set of operations, inputs, and a target value does there exist an expression tree with those inputs and operations that evaluates to the target? We consider the variations where the structure of the tree is also given and the variation where no parentheses are allowed in the expression.

\keywords{NP-completeness \and Arithmetic Expression Trees \and Computational Complexity}
\end{abstract}
\section{Introduction}

Arithmetic expression trees are trees with numbers at the leaves and operators at internal nodes. Each operator takes its children as inputs and provides the evaluation to its parent. An example can be seen in Figure~\ref{fig:exptree}. These can be seen as a restricted form of arithmetic circuits which form a directed acyclic graph whose nodes have operators. Arithmetic and symbolic expression trees are important structures in computer science.

\definecolor{op}{rgb}{.72,.64,.77}
\definecolor{leaf}{rgb}{.98,.91,.52}

\begin{figure}[H] 
  \centering
  \begin{tikzpicture}[
    every node/.style={minimum width=1.8em,inner sep=0,draw,circle},
    level distance=0.8cm,
    level 1/.style={sibling distance=3cm},
    level 2/.style={sibling distance=1.5cm},
    level 3/.style={sibling distance=1.5cm},
  ]
  \node[label=98,fill=op] at (0,0) {$\boldsymbol-$}
    child { node[label=99,fill=op] {$\boldsymbol\times$}
      child { node[fill=leaf] {$11$} }
      child { node[fill=leaf] {$9$} }
    }
    child { node[label=1,fill=op] {$\boldsymbol\div$}
      child { node[fill=leaf] {$4$} }
      child { node[label=4,fill=op] {$\boldsymbol+$}
        child { node[fill=leaf] {$3$} }
        child { node[fill=leaf] {$1$} }
      }
    };
  \end{tikzpicture}
  \caption[Arithmetic Expression Construction results]
    {An example expression tree for $9 \times 11 - (4 \div (3 + 1)) = 98$. The numbers above the internal nodes indicate their values.}
  \label{fig:exptree}
\end{figure}
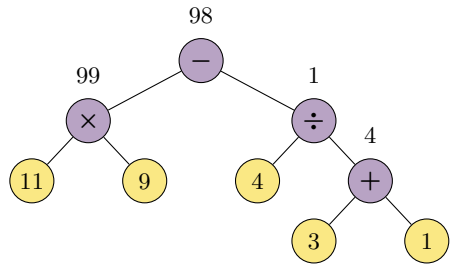

Recently arithmetic expression trees have been used in machine learning systems making progress in solving math word problems. Despite their success at many natural language processing tasks, it was shown that transformers and other large language models like GPT do not perform well on math word problems and mathematical reasoning\cite{hendrycks2020measuring,hendrycks2021measuring}. A number of systems have used sequence-to-tree and graph-to-tree neural networks to generate arithmetic expression trees from input questions and used those to calculate answers. The main targets have been the Math23k and MAWPS data sets which contain more than 23,000 math word problems\cite{koncel2016mawps,li2019modeling,wang2019template,zhang2020graph2tree,li2020graph2tree,wu2020knowledge,qin2020semantically,lan2021mwptoolkit}. Similar techniques have been used to tackle more advanced problems. Prediction of arithmetic expression trees using transformers\cite{vaswani2017attention} and graph neural networks has also been used to solve machine learning exercises from MIT's 6.036 Introduction to Machine Learning\cite{tran2021solving}. Predicting symbolic expression trees has also been applied to solving integration and ordinary differential equation \cite{lample2019deep} problems and outperforms Mathematica and Maple on test problems.  

The need to predict arithmetic expression trees with constraints motivates the further study of the computational complexity of the problem. Versions of this Arithmetic Expression Tree problem were studied in \cite{alcock2020arithmetic} motivated by algebraic complexity theory and recreational math. The construction of small algebraic circuits has been given significant study with one of the earliest being Scholz's 1937 study of
minimal addition chains \cite{scholzchains}, which is equivalent to finding
the smallest circuit with operation $+$ that outputs a target value~$t$. This problem was later shown to be NP-complete \cite{minadditionchains}. 

From the recreational side, arithmetic expression construction can be seen as a generalization of a number of games and educational puzzles. Perhaps best known is The 24 Game, a card game involving finding sets of four cards that can be combined with standard operations to equal 24 \cite{24game-wiki}. A variation of the problem also shows up in the Number Round of the British game show Countdown; however in this version the target is randomly generated and contestants do not need to use all of the numbers given. Some combinatorial and algorithmic aspects of this specific problem have been studied in other papers\cite{ozkul27finding,alliot2015final,colton2014countdown}.

%Another inspiration for looking at these variations was frustration over the viral twitter post which boiled down to different conventions in order of operations. One author found it to be amusing to try to divine what rules would cause the expression to evaluate to the various proposed answers.

Leo et al \cite{alcock2020arithmetic} define the main problem of Arithmetic Expression Construction:

\defineproblem{\SGSTD{\mathbb{L}}{\ops} / \textrm{Standard}}
{A multiset of values $A = \{a_1, a_2, \dots, a_n\}\subseteq \mathbb{L}$ and a target value $t\in\mathbb{L}$.}
{Does there exist a parenthesized expression using any of the operations in $\ops$ containing each element of $A$ exactly once and evaluating to $t$?}

Here as in the prior paper we will take $\ops$ to be a subset of $\{+,-,\times,\div\}$.%, though looking at operations beyond basic arithmetic ones seems interesting.

Alcock et al. \cite{alcock2020arithmetic} also defines two variants, one where all operations in the tree are given and the question is whether the input numbers can be assigned to leaves of that tree to evaluate to the target, and the other fixes the order of the numbers in the expression, but leaves the choice of operator open. The paper gives algorithms and NP-completeness results for many subsets of operations for these problems. It provides a full characterization for the enforced operations version, and and a full characterization for the standard version up to weak versus strong NP-completeness. Containment in NP is trivial as an expression tree can be evaluated in polynomial time, giving a polynomial time witness.

We explore two new versions of this problem: 1) enforced parentheses gives the structure of the tree which the expression tree must conform to, alternatively it specifies a full parenthesization of the expression. 2) no parentheses in which operations must be evaluated with respect to the standard order of operations. It is interesting to note this second version is equivalent to depth $2$ trees whose operators have arbitrary fan-in rather than being binary operators. In Section~\ref{sec:Enforced Parentheses} we show the reductions for standard arithmetic expression construction carry over giving the same characterization. In Section~\ref{sec:No Parentheses} we show a number of NP-completeness results detailed in Table~\ref{table:summary}.

%ML Other work has also used neural nets to generate expression trees to answer mathematical questions. This includes, recent work which uses Sequence-to-sequence ML to generate symbolic expression trees to solve symbolic integration and ordinary differential equations \cite{lample2019deep} which outperforms Mathematica and Maple on test problems. Other work uses neural networks for generating expression trees for symbolic regression, specifically applied to physics equations \cite{udrescu2020ai}.

%The problem \SGSTD{\nats}{\{+,-,\times,\div\}} naturally generalizes two games played by humans.

We consider the following two variants of AEC which impose additional constraints (specified by some data we denote by $D$) on the expression trees:

%\defineproblem{$(\mathbb{L}, \ops)$-Enforced Leaves AEC (\SGELnats{\ops})} 
% \defineproblem{\SGEL{\mathbb{L}}{\ops} / \textrm{Enforced Leaves}}
% {A target value $t\in \mathbb{L}$ and a multiset of values $A = \{a_1,\ldots, a_n\}\subseteq \mathbb{L}$ with the leaf order encoded by $D:A\rightarrow [n]$.}
% {Can an expression tree be formed such that each internal node has an operation from $\ops$, and the leaves of the tree are the list $A$ in order $D$, where the tree evaluates to $t$?}

%\defineproblem{$(\mathbb{L}, \ops)$-Enforced Operations AEC (\SGEOnats{\ops})}
% \defineproblem{\SGEO{\mathbb{L}}{\ops} / \textrm{Enforced Operations}}
% {A multiset of values $A = \{a_1, a_2, \dots, a_n\}\subseteq \mathbb{L}$, a target $t\in \mathbb{L}$, and an expression tree $D$ with internal nodes each containing an operation from $\ops$ and empty leaf nodes.}{Can the expression tree be completed by assigning each value in $A$ to exactly one leaf node such that the tree evaluates to $t$?}

% For the converse, we present the \emph{Rational Function Framework}
% in Section, which provides a polynomial-time construction of
% a positive integer $B$ on an instance $I$
% (i.e., set of values $\{a_i\}, t\in \mathbb{N}[x_1,\ldots, x_{k}]$)
% such that replacing $x_{k} = B$ yields a solvable instance if and only if
% $I$ is solvable. That is, for all variants $\variant \in \{\textsc{Std}, \textsc{EO}, \textsc{EL}\}$,
% we obtain a simple reduction
% $$\SG{\mathbb{N}[x_1, \ldots, x_k]}{\ops}{\variant} \rightarrow\SG{\mathbb{N}[x_1, \ldots, x_{k-1}]}{\ops}{\variant}$$

\defineproblem{\SGEP{\mathbb{L}}{\ops} / \textrm{Enforced Parenthesis}}
{A multiset of values $A = \{a_1, a_2, \dots, a_n\}\subseteq \mathbb{L}$, a target value $t\in\mathbb{L}$, and a tree $D$.}
{Does there exist a parenthesized expression using any of the operations in $\ops$ that contains each element of $A$ exactly once and evaluates to $t$ such that its expression tree is isomorphic to $D$?}

\defineproblem{\SGNP{\mathbb{L}}{\ops} / \textrm{No Parenthesis}}
{A multiset of values $A = \{a_1, a_2, \dots, a_n\}\subseteq \mathbb{L}$ and a target value $t\in\mathbb{L}$.}
{Does there exist an no-parenthesis expression using any of the operations in $\ops$ that contains each element of $A$ exactly once and evaluates to $t$? (Multiplications and divisions are done before additions and subtractions)}

\begin{table}[htbp]
\centering
\footnotesize
\tabcolsep=2pt
\begin{minipage}{\textwidth}%
\centering
\begin{tabular}{|c|c|c|} \hline
\textbf{Operations}  &  \textbf{Enforced Parenthesis (\textsection \ref{sec:Enforced Parentheses})}  &  \textbf{No Parenthesis (\textsection \ref{sec:No Parentheses})}  \\ \hline
$\{+\}$  &  $\in \text{P}$   &  $\in \text{P}$ (\textsection \ref{sec:single})  \\ \hline
$\{-\}$  &  weakly NP-complete   &  $\in \text{P}$ (\textsection \ref{sec:single}) \\ \hline
$\{\times\}$  &  $\in \text{P}$   &  $\in \text{P}$ (\textsection \ref{sec:single})   \\ \hline
$\{\div\}$  &  strongly NP-complete   &  $\in \text{P}$ (\textsection \ref{sec:single})  \\ \hline
$\{+,-\}$  &  weakly NP-complete   &  {weakly NP-complete} (\textsection \ref{sec:plusminus}) \\ \hline
$\{+,\times\}$  &  {weakly NP-complete}  &  weakly NP-complete  (\textsection \ref{sec:plustimes}) \\ \hline
$\{+,\div\}$  &  {weakly NP-complete }   &  {weakly NP-complete } (\textsection \ref{sec:plusdivide})\\ \hline
$\{-,\times\}$  &  weakly NP-complete   &  {weakly NP-complete}   (\textsection \ref{sec:minustimes}) \\ \hline
$\{-,\div\}$  &  {weakly NP-complete}   &  {weakly NP-complete} (\textsection \ref{sec:minusdivide}) \\ \hline
$\{\times,\div\}$  &  {strongly NP-complete}   &  {strongly NP-complete}  (\textsection \ref{sec:timesdivide}) \\ \hline
$\{+,-,\times\}$  &  {  weakly NP-complete }  &  {weakly NP-complete}   (\textsection \ref{sec:plusminustimes}) \\ \hline
$\{+,-,\div\}$  &  {weakly NP-complete}   &  {weakly NP-complete} (\textsection \ref{sec:minusdivide})\\ \hline
$\{+,\times,\div\}$  &  {weakly NP-complete }    &  {weakly NP-complete}   (\textsection \ref{sec:plustimesdivide}) \\ \hline
$\{-,\times,\div\}$  &  {    weakly NP-complete }  &  {weakly NP-complete} (\textsection \ref{sec:minusdividetimes})\\ \hline
$\{+,-,\times,\div\}$  &  {weakly NP-complete }  &  {weakly NP-complete} (\textsection \ref{sec:minusdividetimes})\\ \hline
\end{tabular}% tk

\caption{Results for all new variations of arithmetic expression construction.}

\label{table:summary}

\end{minipage}

\end{table}

\section{No Parentheses}
\label{sec:No Parentheses}

In this section we provide a number of proofs relying on the Rational Function Framework from \cite{alcock2020arithmetic}. In their paper they introduce the arithmetic expression construction over ratios of polynomials and prove there is a polynomial time reduction from the problem over integers to the problem over ratios of polynomials. The proof shows how to construct integers which are sufficiently large that they cannot be generated by any expression comprised of standard arithmetic operations and a given multi-set of integers. Further, this constructed number is no more than exponential in the size and number of those integers. The basis representation theorem then allows the construction of a reduction between the proposed Arithmetic Expression Construction problems over integers and over rational polynomials. The additional restrictions of enforced parentheses has no impact on the proof of the Rational Function Framework and thus it suffices to show NP-hardness for the polynomial versions of the arithmetic construction problems we consider. Most of the reductions given show hardness over the field of rational polynomials which implies hardness for the problem over integers by a simple adaptation of Theorem 2.1 in \cite{alcock2020arithmetic}. Although working with polynomials significantly simplifies our proofs, one drawback is this means most of our results only show weak NP-hardness.

\subsection{$\{+,/,*\}$ no parenthesis is weakly NP-hard}
\label{sec:plustimesdivide}

To prove that $(\mathbb{N}[x,y], \{+,/,*\})$ -AEC-NP is weakly NP-hard, we reduce from product partition-n/2\cite{alcock2020arithmetic}. Given an instance of product partition-n/2, \\ $A = \{a_1, a_2,\cdots , a_n\}$, construct instance $I_A$ of $(\mathbb{N}[x,y], \{+,/,*\})$-AEC-NP with the set of values \\ $\{a_1x,\underset{i\ge 2}{a_i}xy,y^{\frac{n}{2}-1},y^{\frac{n}{2}}\}$, target $t=2x^{\frac{n}{2}}\sqrt{\prod_ia_i}$.\\\\
If there is a solution for the product partition-n/2, we can divide A into two subsets $A_1$ and $A_2$ of equal size such that $\frac{\prod A_1}{\prod A_2}=1$. We can partition $a_0x, a_ixy$ into corresponding sets and take their products to get the polynomial $x^{\frac{n}{2}}y^{\frac{n}{2}}\sqrt{\prod_ia_i}$ and another polynomial $x^{\frac{n}{2}}y^{\frac{n}{2}-1}\sqrt{\prod_ia_i}$. We can then construct an instance that gives us the target value: $$\frac{x^{\frac{n}{2}}y^{\frac{n}{2}}\sqrt{\prod_ia_i}}{y^{
\frac{n}{2}}}+\frac{x^{\frac{n}{2}}y^{\frac{n}{2}-1}\sqrt{\prod_ia_i}}{y^{
\frac{n}{2}-1}}=2x^{\frac{n}{2}}\sqrt{\prod_ia_i}.$$

%Next we prove a useful Lemma.
\begin{lemma} \label{+numbers}
In $\{+,*\}$ and $\{+,/,*\}$ with no parenthesis, if the set of values are polynomials with positive integer coefficients that include exactly $n$ $x^1$ terms and the target only has a $x^{\frac{n}{2}}$ term, then there is at most one  ``$+$'' in the arithmetic construction that would yield the target value. 
\end{lemma}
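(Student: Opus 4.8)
The plan is a direct structural argument. Suppose a no-parenthesis expression $E$ uses each of the given values exactly once and evaluates to the target $t$, let $k$ be the number of $+$ signs it contains, and assume $k\ge 1$ and $n\ge 2$ (there is nothing to prove otherwise). Since $*$ and $/$ bind tighter than $+$, $E$ parses as $P_1+\cdots+P_{k+1}$, where the value multiset is partitioned into nonempty blocks $S_1,\dots,S_{k+1}$ and each $P_j$ is what the $*$'s and $/$'s inside block $j$ produce from the values of $S_j$; evaluating left to right, $P_j=\big(\prod_{v\in S_j^+}v\big)/\big(\prod_{v\in S_j^-}v\big)$ for a disjoint splitting $S_j=S_j^+\cup S_j^-$ whose numerator part $S_j^+$ (which holds the first value of the block) is nonempty --- in the $\{+,*\}$ case every $S_j^-=\emptyset$. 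The one structural fact I will use is that each value, regarded as a polynomial in $x$ with coefficients in $\mathbb{N}[y]$, is a single monomial in $x$: the $n$ values carrying an $x^1$ term have $\degx=1$, and every other value has $\degx=0$.

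Let $X$ be the set of the $n$ values with $\degx=1$. Then each block collapses to a single Laurent monomial in $x$, namely $P_j=q_j(y)\,x^{e_j}$ with $e_j=\lvert S_j^+\cap X\rvert-\lvert S_j^-\cap X\rvert$ and $q_j$ a quotient of polynomials having positive integer coefficients. Summing, $t=\sum_j q_j(y)\,x^{e_j}$, so for each integer $e$ the coefficient of $x^e$ in $t$ equals $\sum_{j:\,e_j=e}q_j(y)$; putting this over a common denominator, its numerator is a sum of products of positive-coefficient polynomials, hence a nonzero rational function whenever the sum is nonempty. Because $t$ has only an $x^{n/2}$ term, this forces $e_j=n/2$ for every $j\in\{1,\dots,k+1\}$. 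I expect this no-cancellation step --- in particular keeping track of the rational ``coefficients'' created by the divisions --- to be the only part needing real care; what is left is counting.

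Set $a_j=\lvert S_j^+\cap X\rvert$ and $b_j=\lvert S_j^-\cap X\rvert$, so $a_j-b_j=e_j=n/2$. Every element of $X$ lies in exactly one block, in its numerator or its denominator, so $\sum_j(a_j+b_j)=\lvert X\rvert=n$. But $a_j+b_j=n/2+2b_j\ge n/2$ for each of the $k+1$ blocks, whence $n=\sum_j(a_j+b_j)\ge(k+1)\,n/2$; since $n\ge 2$ this yields $k+1\le 2$, i.e. $k\le 1$, which is the claim. The argument is uniform over $\{+,*\}$ and $\{+,/,*\}$, the former being the special case in which all $b_j=0$.
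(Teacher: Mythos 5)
Your proof is correct and takes essentially the same route as the paper's: positivity of all coefficients rules out cancellation, forcing every summand to be a pure $x^{n/2}$ monomial, and since division can only lower the $x$-degree each summand must consume at least $n/2$ of the $n$ degree-one values, so there can be at most two summands. Your version is somewhat more carefully written (explicit numerator/denominator bookkeeping via $a_j, b_j$), but the underlying counting argument is identical.
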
 
\begin{proof}

Suppose that there are two or more ``$+$"s in the construction, then the instance can be expressed as $P_1+P_2+\cdots+P_n (n\ge 3)$, where $P_i$ represents a part of our construction that only uses the ``$*$" and ``$/$" operations.

The target value only consists of a $x^\frac{n}{2}$ term. By the rule of addition and because everything is positive, it must be true that each of $P_i$ is a $x^\frac{n}{2}$ term.

Now we show that each $P_i$ consists at least $\frac{n}{2}$ $x^1$ terms among the set of values that are given. The rule of multiplication tells us that if we multiply $\frac{n}{2}$ $x^1$ terms together, we will have a $x^\frac{n}{2}$ term. If we use division at any point, the degree of $x$ would only be smaller. Therefore, there are at least $\frac{n}{2}$ $x^1$ terms in $P_i (1\le i\le n)$.

Since $n\ge 3$, we need to use at least $\frac{3n}{2}$ $x^1$  terms in our construction, but we only have $n$ such terms that are given. Thus, it is impossible to have two or more ``$+$"s in our construction.
\end{proof}

Now we prove the converse. Multiplication and division are done before addition. The target only has a $x^{\frac{n}{2}}$ term. By Lemma \ref{+numbers}, there must be $0$ or $1$ "+"s in our construction.
Therefore, we have to either sum up two $x^{\frac{n}{2}}$ terms, or divide a $x^{\frac{3n}{4}}$ term by a $x^{\frac{n}{4}}$ term.
If we divide a $x^{\frac{3n}{4}}$ term by a $x^{\frac{n}{4}}$ term, then we have $\frac{\text{product of } (\frac{3}{4}n) a_i xy \text{ terms}}{\text{product of } (\frac{1}{4}n-1) a_i xy \text{ terms}\cdot a_0x}$ or $\frac{\text{product of } (\frac{3}{4}n-1)a_i xy \text{ terms}\cdot a_0x}{\text{product of } (\frac{1}{4} n)a_i xy \text{ terms}}$. Therefore, the degree of $y$ is either $\frac{n}{2}-1$ or $\frac{n}{2}+1$. We are left with two other elements in our set, which are $y^{\frac{n}{2}-1},y^{\frac{n}{2}}$. However, it is impossible to cancel out $y^{\frac{n}{2}+1}$ (or $y^{\frac{n}{2}-1}$) when we need both $y^{\frac{n}{2}-1}$ and $y^{\frac{n}{2}}$ in the fraction, so the result must include a term with $y$, which means that it cannot be our target value.
Therefore, the only possibility is to sum up two $x^{\frac{n}{2}}$ terms.

To do so, we need to sum up the product of $(\frac{n}{2}) \hspace{2mm} a_ixy$ terms and the product of $(\frac{n}{2}-1) \hspace{2mm} a_ixy$ terms and the $a_1x$ term.

Suppose that the product of the $(\frac{n}{2}) \hspace{2mm} a_ixy$ terms is $t\cdot x^{\frac{n}{2}} y^{\frac{n}{2}}$ ($t$ is an integer), and the product of the $(\frac{n}{2}-1) \hspace{2mm} a_ixy$ terms and the $a_1x$ term is $s\cdot x^{\frac{n}{2}}y^{{\frac{n}{2}}-1}$ ($s$ is an integer). We know that $t\cdot s=\prod_i a_i$. If the solution is also valid for our instance $I_A$ of $(\mathbb{N}[x,y], \{+,/,*\})$ -AEC-NP, then $t+s=2\sqrt{\prod_i a_i}$.
Combining the two equations, we have
\[t\cdot s=\prod_i a_i \]
\[t+s=2\sqrt{\prod_i a_i}\]
Let $k=\frac{t}{\sqrt{\prod_i a_i}}$, and $m=\frac{s}{ \sqrt{\prod_i a_i}}$.
We have 
\[k\cdot m=1 \]
\[k+m=2\]
The only real solution to this system of equations is $k=m=1$. Therefore, $t=s=\sqrt{\prod_ia_i}$, which means that the product of $(\frac{n}{2}) \hspace{2mm} a_ixy$ terms is $\sqrt{\prod_ia_i}x^{\frac{n}{2}}y^{\frac{n}{2}}$, and the product of the $(\frac{n}{2}-1) \hspace{2mm} a_ixy$ terms and the $a_1x$ term is $\sqrt{\prod_ia_i}x^{\frac{n}{2}}y^{\frac{n}{2}-1}$. 
\\

We now have $x^{\frac{n}{2}}y^{\frac{n}{2}}\sqrt{\prod_ia_i}+x^{\frac{n}{2}}y^{\frac{n}{2}-1}\sqrt{\prod_ia_i}$. We are left with $y^{\frac{n}{2}-1},y^{\frac{n}{2}}$, and the only way to cancel out the $y$ term is by dividing the first product by $y^{\frac{n}{2}}$ and dividing the second product by $y^{\frac{n}{2}-1}$. Thus, the problem can only be solved if there is a solution to product partition-n/2.

\subsection{$\{+,*\}$ no parenthesis is weakly NP-hard}
\label{sec:plustimes}

To prove that $(\mathbb{N}[x], \{+, *\})$ -AEC-NP is weakly NP-hard, we proceed by reduction from product partition-n/2. Given an instance of product partition-n/2, $A = \{a_1, a_2,\cdots , a_n\}$, construct instance $I_A$ of $(\mathbb{N}[x], \{+, *\})$-AEC-NP with the set of values $\{a_1x,a_2x,\cdots,a_nx\}$, target $t=2x^{\frac{n}{2}}\sqrt{\prod_ia_i}$.

If this instance of product partition-n/2 has a solution, we can divide A into two subsets $A_1$ and $A_2$ of equal size such that $\frac{\prod A_1}{\prod A_2}=1$. We can partition $a_ix$ into corresponding sets and take their products to get two polynomials of value $x^{\frac{n}{2}}\sqrt{\prod_ia_i}$. We can add these two polynomials, which would give us the target: $t=2x^{\frac{n}{2}}\sqrt{\prod_ia_i}$.

Now we prove the converse. The target only has a $x^{\frac{n}{2}}$ term and the only operations we are allowed to use are $*$ and $+$. Thus each group of multiplications must have exactly $n/2$ terms by Lemma \ref{+numbers}. The coefficient of $x^{\frac{n}{2}}$ is $2\sqrt{\prod_ia_i}$, so the problem is the same as product partition-n/2.

\subsection{$\{-,*\}$ no parenthesis is weakly NP-hard}
\label{sec:minustimes}
To show that $(\mathbb{N}[x,y], \{-,*\})$ -AEC-NP is weakly NP-hard, we reduce from product partition-n/2. Given an instance of product partition, $A = \{a_1, a_2,\cdots , a_n\}$, construct instance $I_A$ of $(\mathbb{N}[x,y], \{-,*\})$ -AEC-NP with the set of values \\ $\{a_1x,a_2x,\cdots,a_nx,y,y\}$, target $t=0$.

If the product partition-n/2 has a solution, we divide A into two subsets $A_1$ and $A_2$ of equal size such that $\frac{\prod A_1}{\prod A_2}=1$. We can partition $a_ix$ into corresponding sets and take their products to get two polynomials of value $x^{\frac{n}{2}}\sqrt{\prod_ia_i}$. Then we can construct an instance with the remaining two copies of $y$ to give the target value: $x^{\frac{n}{2}}\sqrt{\prod_ia_i}\cdot y - x^{\frac{n}{2}}\sqrt{\prod_ia_i}\cdot y=0$.

Now we show that this is the only solution to $(\mathbb{N}[x,y], \{-,*\})$ -AEC-NP. All expressions of $(\mathbb{N}[x,y], \{-,*\})$ -AEC-NP can be expressed as $P_1-P_2-\cdots-P_n=0$, where $P_i$ represents a part of our construction that only uses the "$*$" operation. We can rewrite this as $P_1=P_2+\cdots+P_n$

First of all, we can show that $P_1$ must include exactly one copy of $y$. 
If $P_1$ includes $y^2$, then $P_1\neq P_2+\cdots+P_n$, because $P_2+\cdots+P_n$ would not have any  terms with $y$. 

If $P_1$ does not contain $y$, then two copies of $y$ must be in $P_i$ and $P_j$, where $i,j\ge 2$. In the rational function framework, we assumed that $x$ is a large positive integer, and all $a_i$ are natural numbers, so $P_i+P_j>0$ and it contains a term with $y$. However, $P_1$ does not contain $y$, so $P_1\neq P_2+\cdots+P_n$. To sum up, $P_1$ and another $P_i$ both consist $y$. Without loss of generality, we assume that $P_1$ and $P_2$ each has one copy of $y$.

Now we can prove that $n=2$. In other words, the only solution is $P_1=P_2$. Suppose that $n\ge 3$, then $P_1=P_2+\cdots+P_n$, with $P_1$ and $P_2$ each includes $y$. We have $P_1-P_2=P_3+\cdots+P_n$.

If $P_1-P_2=0$, then $P_3+\cdots+P_n=0$. This is impossible because each of $a_ix$ is a positive number. 

If $P_1-P_2\neq 0$, then $P_1-P_2$ includes $y$. However, $P_3+\cdots+P_n$ does not include $y$, so this is contradictory. 

Therefore, it is not possible that $n\ge 3$, which means $P_1=P_2$, and $P_1, P_2$ both include $y$. Since $P_1=P_2$, $\deg_x(P_1)=\deg_x(P_2)=x^{\frac{n}{2}}$. Thus, $P_1$ and $P_2$ each include half of the $a_ix$ elements, and their products are equal to each other. This problem is the same as the product partition-n/2.

\subsection{$\{+,-,*\}$ no parenthesis is weakly NP-hard}
\label{sec:plusminustimes}

To prove that $(\mathbb{N}[x,y], \{+,-,*\})$ -AEC-NP is weakly NP-hard, we reduce from $3$-partition-3\cite{GareyJohnson}. Given an instance of $3$-partition-3, $A = \{a_1, a_2,\cdots , a_n\}$, construct instance $I_A$ of $(\mathbb{N}[x,y], \{+,-,*\})$-AEC-NP with the set of values $\{\underset{1\le i \le n}{a_ix}\}$ and three copies of $\{\underset{1\le j \le \frac{n}{3}}{y_j}\}$. The target $t=(\frac{3}{n}\sum a_i)(\sum y_j)x$.

If this instance of $3$-partition-3 has a solution, we can divide A into subsets $A_1,A_2,\cdots,A_\frac{n}{3}$ of equal size and equal sum. The sum of elements in each subset is $\frac{3}{n}\sum_{1\le i \le n} a_i$. We can partition $a_ix$ into corresponding sets $A_i$, and multiply each of the element in the same set, $A_i$, by $y_i$, and add all such products together. We can construct an instance that gives us the target value: $\sum_{a_i \in A_1} a_ixy_1 + \sum_{a_i \in A_2} a_ixy_2 + \cdots \sum_{a_i \in A_{\frac{n}{3}}xy} a_ixy_{\frac{n}{3}}=(\frac{3}{n}\sum a_i)(\sum y_j)x$.

Now we prove the converse. Suppose that the instance $I_A$ of $(\mathbb{N}[x,y], \{+,-,*\})$-AEC-NP has a solution. Multiplications are done before additions and subtractions, so a solution to this problem can be expressed as $P_1\pm P_2\pm \cdots \pm P_m$. Our target has a $xy_1$ term, a $xy_2$ term, $\cdots$, and a $xy_\frac{n}{3}$ term. Since there are $3$ copies of each $y_i$, the solution must be $(\text{sum/difference of three} \hspace{2mm} a_ix \hspace{2mm} \text{terms})\cdot y_1 + \cdots + (\text{sum/difference of three} \hspace{2mm} a_ix \hspace{2mm} \text{terms})\cdot y_\frac{n}{3}=k_1xy_1+k_2xy_2+\cdots+k_\frac{n}{3}xy_\frac{n}{3}$ where $k_1,k_2,\cdots,k_\frac{n}{3}$ are constants. In the target, $k_1=k_2=\cdots=k_\frac{n}{3}=\frac{3}{n}\sum a_i$, which means $k_1+k_2+\cdots+k_\frac{n}{3}=\sum a_i$. Therefore, ``$-$" is never used in the instance, and the solution has to be $(\text{sum of three} \hspace{2mm} a_ix \hspace{2mm} \text{terms})\cdot y_1 + \cdots + (\text{sum of three} \hspace{2mm} a_ix \hspace{2mm} \text{terms})\cdot y_\frac{n}{3}$. This problem is the same as finding a solution for the $3$-partition-3 problem.

\subsection{$\{+,/\}$ no parenthesis is weakly NP-hard}
\label{sec:plusdivide}
We reduce from product partition-n/2. Given an instance of product partition-n/2, $A = \{a_1, a_2,\cdots , a_n\}$, construct instance $I_A$ of $(\mathbb{N}[x,y], \{+,/\})$ -AEC-NP with the set of values \\ $\{a_1xy, \underset{i\ge 2}{a_i}x,\sqrt{\prod_ia_i}x^{\frac{n}{2}},\sqrt{\prod_ia_i}x^{\frac{n}{2}}y\}$ and target $t=2$.

If the product partition-n/2 has a solution, we divide A into two subsets $A_1$ and $A_2$ of equal size such that $\frac{\prod A_1}{\prod A_2}=1$. We can partition $a_1xy, \underset{i\ge 2}{a_i}x$ into corresponding sets. We divide $\sqrt{\prod_ia_i}x^{\frac{n}{2}}y$ by all elements in the set that contains $a_1xy$, and then we divide $\sqrt{\prod_ia_i}x^{\frac{n}{2}}$ by all elements in the other set. Finally, we add up the two polynomials to get the target value: $\frac{x^{\frac{n}{2}}\sqrt{\prod_ia_i}}{x^{\frac{n}{2}}\sqrt{\prod_ia_i}}+\frac{x^{\frac{n}{2}}y\sqrt{\prod_ia_i}}{x^{\frac{n}{2}}y\sqrt{\prod_ia_i}}=2 $

Now we prove that this is the only solution to this instance $I_A$ of $(\mathbb{N}[x,y], \{+,/\})$ -AEC-NP. Division are done before additions. All expressions of $(\mathbb{N}[x,y], \{+,/\})$ -AEC-NP can be expressed as $P_1+P_2+\cdots+P_n=2$, where $P_i$ represents a part of our construction that only uses the "$/$" operation.
$2$ is a constant, so $P_1, P_2, \cdots, P_n$ must all be constants (because $a_i\ge 0$).

We can show that the two elements $\sqrt{\prod_ia_i}x^{\frac{n}{2}}$ and $\sqrt{\prod_ia_i}x^{\frac{n}{2}}y$ are not in the same $P_i$. Suppose that these two elements are in the same $P_i$, and the value of $P_i$ is a constant. $\sqrt{\prod_ia_i}x^{\frac{n}{2}}y$ must be the dividend due to the fact that it is the element of the set that has the largest degrees for $x$ and $y$. (If $\sqrt{\prod_ia_i}x^{\frac{n}{2}}y$ is a divisor for some $P_i$, then the denominator of the resulting fraction of $P_i$ must contain $x$ or $y$. In this case, $P_i$ is no longer a constant.) We assumed that $\sqrt{\prod_ia_i}x^{\frac{n}{2}}$ is one of the divisors, and we know that we need to include $a_1xy$ as another divisor to cancel out the $y$ term in $\sqrt{\prod_ia_i}x^{\frac{n}{2}}y$. However, after dividing $a_1xy$ and $\sqrt{\prod_ia_i}x^{\frac{n}{2}}$, the degree of $x$ is $-1$, and $/$ is the only operation allowed. It is impossible for $P_i$ to be a constant, which means that $\sqrt{\prod_ia_i}x^{\frac{n}{2}}$ and $\sqrt{\prod_ia_i}x^{\frac{n}{2}}y$ are not in the same $P_i$. 

We can assume that $\sqrt{\prod_ia_i}x^{\frac{n}{2}}$ is in $P_1$ and $\sqrt{\prod_ia_i}x^{\frac{n}{2}}y$ is in $P_2$. $\sqrt{\prod_ia_i}x^{\frac{n}{2}}$ and $\sqrt{\prod_ia_i}x^{\frac{n}{2}}y$ must both be the dividends in $P_1$ and $P_2$, because all the other elements have lower degrees of $x$. $P_1$ is a constant, so $\sqrt{\prod_ia_i}x^{\frac{n}{2}}$ must be divided by $(\frac{n}{2}) \hspace{2mm} a_ix$ terms. $P_2$ is also a constant, so $\sqrt{\prod_ia_i}x^{\frac{n}{2}}y$ must be divided by $(\frac{n}{2}-1) \hspace{2mm} a_ix$ terms and the $a_1xy$ term.

Suppose that the product of the $(\frac{n}{2}) \hspace{2mm} a_ix$ terms in $P_1$ is $t\cdot x^{\frac{n}{2}}$ ($t$ is an integer), and the product of the $(\frac{n}{2}-1) \hspace{2mm} a_ix$ terms and the $a_1xy$ term in $P_2$ is $s\cdot x^{\frac{n}{2}}y$ ($s$ is an integer). We know that $t\cdot s=\prod_i a_i$. If the solution is also valid for our instance $I_A$ of $(\mathbb{N}[x,y], \{+,/\})$ -AEC-NP, then $\frac{\sqrt{\prod_ia_i}}{t}+\frac{\sqrt{\prod_ia_i}}{s}=2$.
Combining the two equations, we have
\[t\cdot s=\prod_i a_i \]
\[\frac{\sqrt{\prod_ia_i}}{t}+\frac{\sqrt{\prod_ia_i}}{s}=2\]
Let $k=\frac{\sqrt{\prod_ia_i}}{t}$, and $m=\frac{\sqrt{\prod_ia_i}}{s}$.
We have 
\[k\cdot m=1 \]
\[k+m=2\]
The only real solution to this system of equations is $k=m=1$. Therefore, $t=s=\sqrt{\prod_ia_i}$, which means that the product of $(\frac{n}{2}) \hspace{2mm} a_ix$ terms in $P_1$ is $\sqrt{\prod_ia_i}x^{\frac{n}{2}}$, and the product of the $(\frac{n}{2}-1) \hspace{2mm} a_ix$ terms and the $a_1xy$ term in $P_2$ is $\sqrt{\prod_ia_i}x^{\frac{n}{2}}y$. %This problem is the same as square-product partition.

\subsection{$\{-,/\}$, $\{+,-,/\}$ no parenthesis are weakly NP-hard}
\label{sec:minusdivide}
To show that $(\mathbb{N}[x], \{+,-,/\})$ -AEC-NP is weakly NP-hard, we reduce from product partition-n/2. Given an instance of product partition-n/2, $A = \{a_1, a_2,\cdots , a_n\}$, construct instance $I_A$ of $(\mathbb{N}[x], \{+,-,/\})$ -AEC-NP with the set of values \\ $\{\underset{i\ge 1 }{a_i}x,\sqrt{\prod_ia_i}x^{2n},\sqrt{\prod_ia_i}x^n\}$, target $t=x^{\frac{3n}{2}}-x^{\frac{n}{2}}$.

If the product partition-n/2 has a solution, we divide A into two subsets $A_1$ and $A_2$ of equal size such that $\frac{\prod A_1}{\prod A_2}=1$. We can partition $a_ix$ into corresponding sets. We divide $\sqrt{\prod_ia_i}x^{2n}$ by $\prod A_1 \cdot x^\frac{n}{2}$, and then we divide $\sqrt{\prod_ia_i}x^n$ by $\prod A_2 \cdot x^\frac{n}{2}$. Finally, we subtract the second polynomial from the first polynomial to get the target value: $\frac{\sqrt{\prod_ia_i}x^{2n}}{\sqrt{\prod_ia_i}x^{\frac{n}{2}}}-\frac{\sqrt{\prod_ia_i}x^{n}}{\sqrt{\prod_ia_i}x^{{\frac{n}{2}}}}=x^{\frac{3n}{2}}-x^{\frac{n}{2}}$.

Now we show that this is the only possible solution. All expressions of $(\mathbb{N}[x], \{+,-,/\})$ -AEC-NP can be expressed as $P_1\pm P_2\pm \cdots \pm P_m=x^{\frac{3n}{2}}-x^{\frac{n}{2}}$, where $P_i$ represents a part of our construction that only uses the "$/$" operation. 

The target value has a $x^{\frac{3n}{2}}$ term, which means that at least one of the $P_i$ is $c\cdot x^{\frac{3n}{2}}$, where $c$ is any constant. Among all elements given, $\sqrt{\prod_ia_i}x^{2n}$ is the only one that has degree larger than $x^{\frac{3n}{2}}$, which means it must be the dividend of one of the $P_i$. Without loss of generality, let $\sqrt{\prod_ia_i}x^{2n}$ be the dividend of $P_1$. Furthermore, since it is impossible for any of $P_2, P_3,\cdots,P_m$ to have a $x^{\frac{3n}{2}}$ term, $P_1=x^{\frac{3n}{2}}$.

If the element $\sqrt{\prod_ia_i}x^n$ is also in $P_1$, then $\frac{\sqrt{\prod_ia_i}x^{2n}}{\sqrt{\prod_ia_i}x^n}=x^n$. This means that the degree of $P_1$ has to be smaller or equal to $n$, which does not satisfy $P_1=x^{\frac{3n}{2}}$. Therefore, the only possible way is $P_1=\frac{\sqrt{\prod_ia_i}x^{2n}}{\text{product of} \hspace{2mm} (\frac{n}{2}) \hspace{2mm} a_ix \hspace{2mm} \text{terms}}=x^{\frac{3n}{2}}$. This means that the product of the $(\frac{n}{2}) \hspace{2mm} a_ix$ terms in $P_1$ is $\sqrt{\prod_ia_i}x^{\frac{n}{2}}$. 

Now we need a $-x^{\frac{n}{2}}$ term to satisfy the target value. The elements that have left are $\sqrt{\prod_ia_i}x^n$ and $(\frac{n}{2}) \hspace{2mm} a_ix$ terms not used in $P_1$. $\sqrt{\prod_ia_i}x^n$ is the only term that has degree larger than $\frac{n}{2}$, so it has to be the dividend of $P_2$. Since $P_2$ has degree $\frac{n}{2}$, the $(\frac{n}{2}) \hspace{2mm} a_ix$ terms not used in $P_1$ must all be the divisors in $P_2$. We know that $\prod A_1=\prod A_2= \sqrt{\prod_ia_i}$, so $P_2=\frac{\sqrt{\prod_ia_i}x^n}{\sqrt{\prod_ia_i}x^{\frac{n}{2}}}=x^{\frac{n}{2}}$.

This means that $P_1-P_2=x^{\frac{3n}{2}}-x^{\frac{n}{2}}$, and this is the only possible construction for this instance of $(\mathbb{N}[x], \{+,-,/\})$-AEC-NP.

Since $(\mathbb{N}[x], \{-,/\})$-AEC-NP is more restrictive than $(\mathbb{N}[x], \{+,-,/\})$-AEC-NP a correct solution to $(\mathbb{N}[x], \{-,/\})$-AEC-NP implies a correct solution to $(\mathbb{N}[x], \{+,-,/\})$-AEC-NP. Since our yes instances of $(\mathbb{N}[x], \{+,-,/\})$-AEC-NP were constructed to only use subtraction and not multiplication, these together imply $(\mathbb{N}[x], \{-,/\})$-AEC-NP must be weakly NP-hard as well.

\subsection{$\{+,-,/,*\}$, $\{-,/,*\}$ no parenthesis are Weakly NP-hard}
\label{sec:minusdividetimes}
Now we show that $(\mathbb{N}[x], \{+,-,/,*\})$ -AEC-NP is weakly NP-hard, we reduce from product partition-n/2. Given an instance of product partition-n/2, $A = \{a_1, a_2,\cdots , a_n\}$, construct instance $I_A$ of $(\mathbb{N}[x], \{+,-,/,*\})$ -AEC-NP with the set of values $\{\underset{i\ge 1}{a_i}x,x^{2n},x^n\}$, and target $t=\sqrt{\prod_ia_i}x^{\frac{5n}{2}}-\sqrt{\prod_ia_i}x^{\frac{3n}{2}}$.

If the product partition-n/2 has a solution, we divide A into two subsets $A_1$ and $A_2$ of equal size such that $\frac{\prod A_1}{\prod A_2}=1$. We can partition $a_ix$ into corresponding sets and take their products to get two polynomials of value $x^{\frac{n}{2}}\sqrt{\prod_ia_i}$. Then we can construct an instance with $x^{2n}$ and $x^n$ to get the target value: $x^{\frac{n}{2}}\sqrt{\prod_ia_i}\cdot x^{2n} - x^{\frac{n}{2}}\sqrt{\prod_ia_i}\cdot x^n=\sqrt{\prod_ia_i}x^{\frac{5n}{2}}-\sqrt{\prod_ia_i}x^{\frac{3n}{2}}$ 

To prove the converse, we show that this is the only solution to this instance of $(\mathbb{N}[x], \{+,-,/,*\})$ -AEC-NP.  All expressions of $(\mathbb{N}[x], \{+,-,/, *\})$ -AEC-NP can be expressed as $P_1\pm P_2\pm \cdots \pm P_m=\sqrt{\prod_ia_i}x^{\frac{5n}{2}}-\sqrt{\prod_ia_i}x^{\frac{3n}{2}}$, where $P_i$ represents a part of our construction that only uses the "$*$", "$/$" operations. 

The target value is $\sqrt{\prod_ia_i}x^{\frac{5n}{2}}-\sqrt{\prod_ia_i}x^{\frac{3n}{2}}$, so there must exist $P_i$ and $P_j$ such that $P_i$ is a $x^{\frac{5n}{2}}$ term and $P_j$ is a $x^{\frac{3n}{2}}$ term. The sum of degrees of $x$ for all given elements is: $1\cdot n+2n+n=4n$. The sum of degrees of $x$ for $x^{\frac{5n}{2}}$ and $x^{\frac{3n}{2}}$ is $4n$. If we use division at any point, the degree of $x$ would only be smaller. Therefore, if a solution is valid, it must not contain division in its construction. Furthermore, $P_1$ must be $\sqrt{\prod_ia_i}x^{\frac{5n}{2}}$ and $P_2$ must be $\sqrt{\prod_ia_i}x^{\frac{3n}{2}}$. $P_1-P_2$ would give us the target value.

The element $x^{2n}$ must be used to construct the $x^{\frac{5n}{2}}$ term for $P_1$, because the degree of $x$ for the product of all the other given elements is still less than $\frac{5n}{2}$. Since division should not be used to construct a solution, we have to multiply $x^{2n}$ by $(\frac{n}{2}) \hspace{2mm} a_i x$ terms to get $\sqrt{\prod_ia_i}x^{\frac{5n}{2}}$. Therefore, the product of the $(\frac{n}{2}) \hspace{2mm} a_i x$ terms in $P_1$ is $\sqrt{\prod_ia_i}$.

The degree of $x$ in $P_2$ is $\frac{3n}{2}$, so the only possible solution is to multiply all the remaining elements, which are $x^n$ and the $(\frac{n}{2})$ remaining  $a_i x$ terms.

Thus, this is the only possible construction, so $(\mathbb{N}[x], \{+,-,/,*\})$ -AEC-NP is weakly NP-hard. 
In our construction, correct instances did not use "$+$", so
$(\mathbb{N}[x], \{-,/,*\})$ -AEC-NP is weakly NP-hard as well. 

\subsection{$\{+,-\}$ is weakly NP-hard}
\label{sec:plusminus}
To show that $(\mathbb{N}, \{+,-\})$ -AEC-NP is weakly NP-hard, we reduce from partition. Given an instance of partition, $A = \{a_1, a_2,\cdots , a_n\}$, construct instance $I_A$ of $(\mathbb{N}, \{+,-,\})$ -AEC-NP with the set A, and target $t=0$.
\\\\
If the partition problem has a solution, then there is a partition $(A_1,A_2)$ of $A$ such that $\sum A_1=\sum A_2$. We can construct an expression of the form $a_{i_1}+a_{i_2}+\cdots+a_{i_k}-(a_{j_1}+a_{j_2}+\cdots+a_{j_{k'}})=0$ where $a_i\in A_1$ and $a_j\in A_2$. All solutions of $(\mathbb{N}, \{+,-\})$ -AEC-NP can be rewritten in this way completing the reduction.

\subsection{$\{*,/\}$ is strongly NP-hard}
\label{sec:timesdivide}
To show that $(\mathbb{N}, \{*,/\})$ -AEC-NP is strongly NP-hard, we reduce from product partition\cite{ProductPartition}. Given an instance of product partition, $A = \{a_1, a_2,\cdots , a_n\}$, construct instance $I_A$ of $(\mathbb{N}, \{*,/\})$ -AEC-NP with the set A, and target $t=1$.

If the product partition problem has a solution, then there is a product partition $(A_1,A_2)$ of $A$ such that $\prod A_1=\prod A_2$. We can construct an expression of the form $\frac{a_{i_1}\cdot a_{i_2}\cdots a_{i_k}}{a_{j_1}\cdot a_{j_2}\cdots a_{j_{k'}}}=1$ where $a_i\in A_1$ and $a_j\in A_2$. All solutions of $(\mathbb{N}, \{*,/\})$ -AEC-NP can be rewritten in this way completing the reduction.

\subsection{$\{+\}$, $\{-\}$, $\{*\}$, $\{/\}$ can be solved in polynomial time}
\label{sec:single}

$\{+\}$ -AEC-NP, $\{*\}$ -AEC-NP $\in P$, because checking the solvability is the same as testing whether the sum (or product) of all elements equals the target value.

For $\{-\}$ -AEC-NP, suppose that the given elements are ${a_1,a_2,\cdots,a_n}$, and the target is $t$. For all $a_i$, we need to check if $a_i-(\sum_{1\le j\le n,j\ne i} a_j)=t$, which can be done in polynomial time.

Similarly, for $\{/\}$ -AEC-NP, we need to check if there exists $a_i$ such that $\frac{a_i}{\prod_{1\le j\le n,j\ne i} a_j}=t$. This can also be done in polynomial time.

\section{Enforced Parenthesis}
\label{sec:Enforced Parentheses}

In this section we show the results of the standard problems carry over to the enforced parenthesis problems for all sets of operations. We can show that the reductions for standard problems mentioned in the paper Arithmetic Expression Construction
would still work when we enforce the expression tree. Since the enforced tree provides a subset of all possible constructions in standards, if the the constructions in the standard problems have a structure that enforces the parenthesization of the problem, then such partitioned structure is also the only possibility in the enforced tree version of the same problem. Thus we simply need to show that if partition problems are solvable, then the enforced parenthesis problems remain solvable. This can be seen by an inspection of the reductions used in that paper; however we give some examples here for guidance.

Example(Standard $\{-\}$): The paper Arithmetic Expression Construction shows that Standard $\{-\}$ is Weakly NP-hard. It shows that if the Partition problem has a solution, we can construct an expression of the form
$(p_1-n_2-\cdots-n_{|A_2|})-(n_1-p_2-\cdots-p_{|A_1|})=\sum A_1-\sum A_2=0$
with $p_i$ in $A_1$ and $n_i$ in $A_2$. Conversely, any solution to the produced $(\mathbb{N},\{-\})$-AEC-Std instance can be factored into this form.

Now we show that this reduction would still work for Enforced Tree $\{-\}$. We can construct an expression of the form $(((p_1-n_2)-n_3)-\cdots-n_{|A_2|})-(((n_1-p_2-)-p_3)-\cdots-p_{|A_1|})$.The solution for this instance exists if the partition problem has a solution.
Furthermore, the paper Arithmetic Expression Construction illustrates that if the Standard $\{-\}$ problem has a solution, it must be true that the partition problem has a solution. Since any solution for standard is in the same form as $(((p_1-n_2)-n_3) -\cdots -n_{|A2|})-(((n_1 -p_2) -p_3) -\cdots -p_{|A1|})$,  if it has a solution, the partition problem also has a solution.
Therefore, $(\mathbb{N},\{-\})$ Enforced Tree is weakly NP-hard.

Similar reasoning applies for other pairs of operations. For $\{+, *\}$, $\{+, /\}$, $\{-, *\}$, $\{-, /\}$,  $\{+, -,*\}$, $\{+,-,/\}$, $\{+, *,/\}$,$\{-, *,/\}$,$\{+, -, *, /\}$ -AEC-STD, all constructed instances must have a very particular partitioned structure described in Theorem $3.2$ in the paper \cite{alcock2020arithmetic}, so these enforced parenthesis problems must also be at least as hard as the standard problems. 

For the remaining constructions we present the following parenthsizations:

For $\{/\}$ -AEC-STD, the parenthesized structure is: \\ $(((\square\div \square) \div \cdots) \div \square) \div (((\square\div \square) \div \cdots) \div \square)$

For $\{+,-\}$ -AEC-STD, the parenthesized structure is: \\ $(((\square+\square)+ \cdots)+\square) - (((\square+ \square)+ \cdots) +\square)$.

For $\{*,/\}$ -AEC-STD, the parenthesized structure is: \\ $(((\square \times \square) \times \cdots) \times \square) \div (((\square \times \square) \times \cdots) \times \square)$.

\section{Conclusion and Open Problems}

We provided polynomial time algorithms or NP-completeness proofs for the Enforced Parenthesis and No Parenthesis variations of the arithmetic expression construction problem with all subsets of the typical four operations. However, many cases still have a gap between weak and strong NP-completeness. 

Variations of the problem inspired by Countdown are also unstudied from a computational complexity standpoint. One could investigate wheter only requiring a subset of the input numbers to be used changes the complexity, but we doubt this will be the case. We think it would be interesting to look into average case hardness for various arithmetic expression tree constructions. In addition, the process of deciding whether one wants a `large' or `small' number next seems like it could provide an interesting problem. This suggests something like deciding to draw random numbers from different distributions to try to maximize the probability of a tree existing.

One interesting avenue that is interesting to explore is expression tree construction with different operations. Min/Max, exponentiation, and modular arithmetic are common and will likely have similar characterizations. More exciting would be descriptions of properties of operations that allow these NP-hardenss reductions, or theorems similar to the rational function framework, to hold. 

Another important question to pursue is a careful consideration of the actual questions needing to be solved in the various machine learning architectures making use of expression tree prediction. Search and decision problems are not always the same complexity. Reuse of values is allowed in the constructions. In addition, one might know the expression trees have small size or low complexity leading to potential investigation of promise or parameterized problems. Or simple trees might be considered desirable properties leading to variations of arithmetic expression construction that want to minimize depth or tree size analogous to the Minimum Circuit Size Problem.

% ---- Bibliography ----
%
% BibTeX users should specify bibliography style 'splncs04'.
% References will then be sorted and formatted in the correct style.
%
\bibliographystyle{splncs04}
\bibliography{references}

\clearpage

\appendix

\section{Related Problems}
\label{sec:related}

To show the NP-hardness of the variants of Arithmetic Expression Construction, we reduce from the following problems:

% \defineproblemrefcomment{\SquareProductPartition}
% {A multiset of square numbers $A = {a_1, a_2, \dots, a_n}$.}
% {Can $A$ be partitioned into two subsets with equal product?}
% {\cite{alcock2020arithmetic}.}
% {Strongly NP-hard.}

\defineproblemrefcomment{\ProductPartition}
{A multiset of positive integers $A = {a_1, a_2, \dots, a_n}$.}
{Can $A$ be partitioned into two subsets with equal product?}
{\cite{ProductPartition}.}
{Strongly NP-hard.}

\defineproblemrefcomment{\TPartition-3}
{A multiset of positive integers $A = {a_1, a_2, \dots, a_n}$, with $n$ a multiple of 3.}
{Can $A$ be partitioned into $n/3$ subsets with equal sum, where all subsets have size 3?}
{\cite{GareyJohnson}, problem SP15.}
{Strongly NP-hard, even when all subsets are required to have size 3 (\TPartitionEqual).}

\defineproblemrefcomment{\ProductPartitionEqual}
{A multiset of positive integers $A = {a_1, a_2, \dots, a_n}$.}
{Can $A$ be partitioned into two subsets with equal size $\frac{n}{2}$ and equal product?} {\cite{alcock2020arithmetic}.}
{Strongly NP-hard.}

\defineproblemrefcomment{\Partition}
{A multiset of positive integers $A = {a_1, a_2, \dots, a_n}$.}
{Can $A$ be partitioned into two subsets with equal sum?}
{\cite{GareyJohnson}, problem SP12.}
{Weakly NP-hard.}
\end{document}